\documentclass[pre,aps,showkeys,showpacs,superscriptaddress,twocolumn,%
amsmath,amssymb]{revtex4} \usepackage{graphicx,amsthm,dsfont}

\newtheorem*{thm}{Theorem}
\newtheorem*{cor}{Corollary}
\newtheorem*{lem}{Lemma}
\newtheorem*{alg}{Algorithm}

\newcommand{\gl}[1]{\eqref{#1}}
\newcommand{\la}{\left\langle}
\newcommand{\ra}{\right\rangle}
\newcommand{\al}{\alpha}
\newcommand{\be}{\beta}

\newcommand{\df}{\Delta f}
\newcommand{\dfhat}{\widehat{\df}}
\newcommand{\dfhatopt}{\dfhat_{1-\al}}
\newcommand{\deltf}{\delta \! f}

\newcommand{\pf}{p_{0}}
\newcommand{\pr}{p_{1}}
\newcommand{\pa}{p_{\al}}
\newcommand{\Ua}{U_{\al}}
\newcommand{\Uhat}{\widehat{U}}
\newcommand{\Mhat}{\widehat{M}}
\newcommand{\Xhat}{\widehat{X}}

\newcommand{\alhat}{\widehat{\al}}
\newcommand{\behat}{\widehat{\be}}

\newcommand{\var}{\operatorname{Var}}
\newcommand{\varf}[1]{\var_{0}\!\! \left(#1\right)}
\newcommand{\varr}[1]{\var_{1}\!\! \left(#1\right)}
\newcommand{\varal}[1]{\var_{\al}\!\! \left(#1\right)}

\newcommand{\mean}[2]{\la #1 \ra_{\! #2}}
\newcommand{\meanf}[1]{\mean{#1}{0}}
\newcommand{\meanr}[1]{\mean{#1}{1}}

\begin{document}
\title{A characteristic of Bennett's acceptance ratio method}
\author{A.M.\,Hahn} \altaffiliation{Current address: Technische
  Universit\"at Berlin, Institut f\"ur Theo\-re\-ti\-sche Physik,
  10623 Berlin, Germany} \affiliation{Institut f\"ur Physik, Carl von
  Ossietzky Universit\"at, 26111 Oldenburg, Germany} \author{H.\,Then}
\affiliation{Institut f\"ur Physik, Carl von Ossietzky Universit\"at,
  26111 Oldenburg, Germany}

\begin{abstract}
  A powerful and well-established tool for free-energy estimation is
  Bennett's acceptance ratio method. Central properties of this estimator,
  which employs samples of work values of a forward and its time
  reversed process, are known: for given sets of measured work values,
  it results in the best estimate of the free-energy difference in the large
  sample limit. Here we state and prove a further characteristic
  of the acceptance ratio method: the convexity of its mean square error.
  As a two-sided estimator, it depends on the ratio of the numbers of
  forward and reverse work values used. Convexity of its mean
  square error immediately implies that there exists an unique optimal ratio
  for which the error becomes minimal. Further, it yields
  insight into the relation of the acceptance ratio method
  and estimators based on the Jarzynski equation.
  As an application, we study the performance of a dynamic strategy
  of sampling forward and reverse work values.
\end{abstract}

\pacs{05.40.-a, 05.70.Ln} \keywords{fluctuation theorem,
  nonequilibrium thermodynamics}

\maketitle

\section{Introduction}\label{sec:1}

A quantity of central interest in thermodynamics and statistical physics
is the (Helmholtz) free-energy, as it determines the equilibrium properties
of the system under consideration. In practical applications, e.g.\ 
drug design, molecular association, thermodynamic stability, and binding
affinity, it is usually sufficient to know free-energy differences.
As recent progress in statistical physics has shown, free-energy
differences, which refer to equilibrium, can be determined via
non-equilibrium processes \cite{Jarzynski1997,Crooks1999}.

Typically, free-energy differences are beyond the scope of analytic
computations and one needs to measure them experimentally or
compute them numerically. Highly efficient methods have been
developed in order to estimate free-energy differences precisely,
including thermodynamic integration \cite{Kirkwood1935,Gelman1998},
free-energy perturbation \cite{Zwanzig1954},
umbrella sampling \cite{Torrie1977,Chen1997,Oberhofer2008},
adiabatic switching \cite{Watanabe1990},
dynamic methods \cite{Sun2003,Ytreberg2004,Jarzynski2006},
asymptotics of work distributions \cite{vonEgan-Krieger2008},
optimal protocols \cite{Then2008}, targeted and escorted free-energy
perturbation \cite{Meng2002,Jarzynski2002,Oberhofer2007,
Vaikuntanathan2008,Hahn2009}.

A powerful \cite{Meng1996,Kong2003,Shirts2008} and frequently
\cite{Ceperley1995,Frenkel2002,Collin2005} used method for free-energy
determination is two-sided estimation, i.e.\ Bennett's acceptance ratio
method \cite{Bennett1976}, which employs a sample of work values of
a driven nonequilibrium process together with a sample of work
values of the time-reversed process \cite{Crooks2000}.

The performance of two-sided free-energy estimation depends on the ratio
\begin{align}\label{rdef}
  r = \frac{n_1}{n_0}
\end{align}
of the number of forward and reverse work values used. Think
of an experimenter who wishes to estimate the free-energy
difference with Bennett's acceptance ratio method and has the possibility
to generate forward as well as reverse work values.
The capabilities of the experiment give rise to an obvious question:
if the total amount of draws is intended to be $N=n_0+n_1$, which is
the optimal choice of partitioning $N$ into the numbers $n_0$ of forward
and $n_1$ of reverse work values, or equivalently, what is the optimal choice
$r_{o}$ of the ratio $r$? The problem is to determine the value
of $r$ that minimizes the (asymptotic) mean square error of Bennett's
estimator when $N=n_0+n_1$ is held constant.

While known since Bennett \cite{Bennett1976},
the optimal ratio is underutilized in the
literature. Bennett himself proposed to use a suboptimal
equal time strategy, instead, because his estimator for
the optimal ratio converges too slowly in order to be practicable.
Even questions as fundamental as the existence and uniqueness are
unanswered in the literature. Moreover, it is
not always clear a priori whether two-sided
free-energy estimation is better than one-sided exponential
work averaging. For instance, Shirts et al.\ have
presented a physical example where it is optimal to draw
work values from only one direction \cite{Shirts2005}.

The paper is organized as follows:
in Secs.~\ref{sec:2} and
\ref{sec:3} we rederive two-sided free-energy estimation and the
optimal ratio. We also remind that two-sided estimation
comprises one-sided exponential work averaging as limiting cases for
$\ln r\to\pm\infty$, a result that is also
true for the mean square errors of the corresponding estimators.

The central result is stated in Sec.~\ref{sec:4}:
the asymptotic mean square error of two-sided estimation is convex
in the fraction $\frac{n_0}{N}$ of forward
work values used. This fundamental characteristic
immediately implies that the optimal ratio $r_o$ exists and is unique.
Moreover, it explains the generic superiority of two-sided estimation if
compared with one-sided, as found in many applications.

To overcome the slow convergence of Bennett's
estimator of the optimal ratio, which is based on estimating second moments,
in Sec.~\ref{sec:5} we transform
the problem into another form such that the corresponding estimator
is entirely based on first moments, which enhances the convergence
enormously.

As an application, in Sec.~\ref{sec:7} we present a dynamic strategy of
sampling forward and reverse work values that maximizes the efficiency
of two-sided free-energy estimation.

\section{Two-sided free-energy estimation}\label{sec:2}

\textit{Given} a pair of samples of $n_0$ forward and $n_1$ reverse work
values drawn from the probability densities
$\pf(w)$ and $\pr(w)$ of forward and reverse work values and
provided the latter are related to each other
via the fluctuation theorem \cite{Crooks1999},
\begin{align}\label{fth}
  \frac{\pf(w)}{\pr(w)} = e^{w-\df},
\end{align}
Bennett's acceptance ratio method
\cite{Bennett1976,Meng1996,Crooks2000,Shirts2003}
is known to give the optimal
estimate of the free-energy difference $\df$ in
the limit of large sample sizes.
Throughout the paper, $\df=\Delta F/kT$
and $w=W/kT$ are understood to be measured
in units of the thermal energy $kT$. The normalized probability
densities $\pf(w)$ and $\pr(w)$ are assumed to
have the same support $\Omega$, and we choose the
following sign convention: $\pf(w):=p_{\text{forward}}(+w)$ and
$\pr(w):=p_{\text{reverse}}(-w)$.

\begin{figure}
  \includegraphics{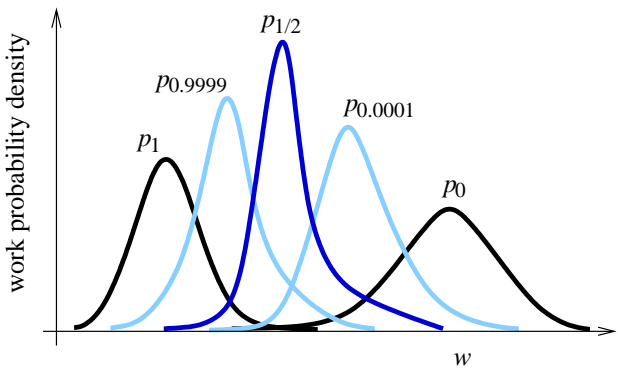}
  \caption{\label{fig:1} (Color online) The overlap density $\pa(w)$ bridges
    the densities $\pf(w)$ and $\pr(w)$ of forward and reverse work values,
    respectively. $\al$ is the fraction $\frac{n_0}{n_0+n_1}$ of forward
    work values, here schematically shown for $\al=0.0001$, $\al=0.5$,
    and $\al=0.9999$\,.
    The accuracy of two-sided free-energy estimates depends
    on how good $\pa(w)$ is sampled when drawing from $\pf(w)$ and
    $\pr(w)$.}
\end{figure}

Now define a normalized density $\pa(w)$ with
\begin{align}\label{pa}
  \pa(w) = \frac{1}{\Ua} \frac{\pf(w)\pr(w)}{\al\pf(w)+\be\pr(w)},
\end{align}
$w\in\Omega$, where  $\al\in [0,1]$ is a real number and
\begin{align}
  \al+\be=1.
\end{align}
The normalization constant $\Ua$ is given by
\begin{align} \label{Udef}
  \Ua = \int\limits_{\Omega} \frac{\pf\pr}{\al\pf+\be\pr} dw.
\end{align}
The density $\pa(w)$ is a normalized harmonic mean of $\pf$ and $\pr$,
$\frac{\pf\pr}{\al\pf+\be\pr}=
\left[\al\frac{1}{\pr}+\be\frac{1}{\pf}\right]^{-1}$,
and thus bridges between $\pf$ and $\pr$, see Fig.~\ref{fig:1}.
In the limit $\al\rightarrow 0$, $\pa(w)$ converges to the forward work
density $\pf(w)$, and conversely for $\al\rightarrow 1$ it converges
to the reverse density $\pr(w)$.
As a consequence of the inequality of the harmonic and arithmetic mean,
$\left[\al\frac{1}{\pr}+\be\frac{1}{\pf}\right]^{-1} \leq \al\pr+\be\pf$,
$\Ua$ is bounded from above by unity,
\begin{align}
  \Ua \leq 1
\end{align}
$\forall \al\in[0,1]$. Except for $\al=0$ and $\al=1$,
the equality holds if and only if $\pf\equiv\pr$. Using the
fluctuation theorem \gl{fth}, $\Ua$ can be written as an
average in $\pf$ and $\pr$,
\begin{align}\label{Uforms}
  \Ua = \meanr{ \frac{1}{\al + \be e^{-w+\df}} }
  = \meanf{ \frac{1}{\be + \al e^{w-\df}} },
\end{align}
where the angular brackets with subscript $\gamma\in[0,1]$ denote
an ensemble average
with respect to $p_\gamma$, i.e.\ 
\begin{align}
  \mean{g}{\gamma} = \int\limits_{\Omega} g(w) p_\gamma(w) dw
\end{align}
for an arbitrary function $g(w)$.

In setting $\al=1$, Eq.~\gl{Uforms} reduces to the nonequilibrium
work relation \cite{Jarzynski1997}
\begin{align}\label{noneqf}
  1 = \meanf{ e^{-w+\df } }
\end{align}
in the \textit{forward} direction, and conversely with $\al=0$ we obtain
the nonequilibrium work relation in the \textit{reverse} direction,
\begin{align}\label{noneqr}
  1 = \meanr{ e^{w-\df} }.
\end{align}
The last two relations can, of course, be obtained more directly from
the fluctuation theorem \gl{fth}. An important application of these relations
is the \textit{one-sided} free-energy estimation:
Given a sample $\{w^0_1\ldots w^0_N\}$ of $N$ forward
work values drawn from $\pf$, Eq.~\gl{noneqf} is commonly used to define
the \textit{forward} estimate $\dfhat_0$ of $\df$ with
\begin{align}\label{dfhatf}
  \dfhat_0 = - \ln \frac{1}{N} \sum\limits_{k=1}^N e^{-w^0_k}.
\end{align}
Conversely, given a sample $\{w^1_1\ldots w^1_N\}$ of $N$ reverse work values
drawn from $\pr$, Eq.~\gl{noneqr} suggests the definition of the
\textit{reverse} estimate $\dfhat_1$ of $\df$,
\begin{align}\label{dfhatr}
\dfhat_1 = \ln \frac{1}{N} \sum\limits_{l=1}^N e^{w^1_l}.
\end{align}

If we have drawn both, a sample of $n_0$ forward \textit{and} a sample of
$n_1$ reverse work values, then Eq.~\gl{Uforms} can
serve us to define a two-sided estimate $\dfhat$ of $\df$ by replacing
the ensemble averages
with sample averages:
\begin{align}\label{benest}
  \frac{1}{n_1} \sum\limits_{l=1}^{n_1}\frac{1}{\al + \be e^{-w^1_l+\dfhat}}
  = \frac{1}{n_0} \sum\limits_{k=1}^{n_0} \frac{1}{\be + \al e^{w^0_k-\dfhat}}.
\end{align}
$\dfhat$ is understood to be the unique root of Eq.~\gl{benest}, which exists
for any $\al\in[0,1]$. Different values of $\al$ result in different estimates
for $\df$. Choosing
\begin{align}\label{alphabeta}
\al = \frac{n_0}{N}, \quad \be = \frac{n_1}{N},
\end{align}
$N=n_0+n_1$, the estimate \gl{benest}
coincides with Bennett's optimal estimate, which defines the
two-sided estimate with least asymptotic mean square error for a given value
$\al=\frac{n_0}{N}$, or equivalently,
\textit{for a given ratio $r=\frac{\be}{\al}=\frac{n_1}{n_0}$}
\cite{Bennett1976,Meng1996}.
We denote the optimal two-sided estimate, i.e.\ the solution of
Eq.~\gl{benest} under the constraint
\gl{alphabeta}, by $\dfhatopt$ and simply refer to
it as the two-sided estimate.
Note that the optimal estimator can be
written in the familiar form
\begin{align}\label{benestfam}
  \sum\limits_{l=1}^{n_1} \frac{1}{1 + e^{-w^1_l+\dfhat+\ln \frac{n_1}{n_0}}}
  = \sum\limits_{k=1}^{n_0} \frac{1}{1 + e^{w^0_k-\dfhat-\ln \frac{n_1}{n_0}}}.
\end{align}

In the limit $\al=\frac{n_0}{N}\rightarrow 1$ the two-sided estimate
reduces to the one-sided forward estimate \gl{dfhatf},
$\dfhatopt \overset{\al\to1}{\longrightarrow} \dfhat_0$,
and conversely
$\dfhatopt \overset{\al\to 0}{\longrightarrow} \dfhat_1$.
Thus the one-sided estimates are the optimal estimates
if we have given draws from only one of the densities $\pf$ or $\pr$.

A characteristic quantity to express the performance of the estimate
$\dfhatopt$ is the mean square error,
\begin{align}
  \la \left(\dfhatopt-\df \right)^2 \ra,
\end{align}
which depends on the total sample size $N=n_0+n_1$ and the fraction
$\al=\frac{n_0}{N}$.
Here, the average is understood to be an ensemble average
in the value distribution  of the estimate $\dfhatopt$ for fixed $N$
and $\al$.
In the limit of large $n_0$ and $n_1$, the asymptotic mean square error $X$
(which then equals the variance) can be
written \cite{Bennett1976,Meng1996}
\begin{align}\label{mse}
  X(N,\al) = \frac{1}{N}\frac{1}{\al\be} \left( \frac{1}{\Ua}-1\right).
\end{align}
Provided the r.h.s.\ of Eq.~\gl{mse} exists, which is guaranteed for any
$\al\in (0,1)$, the $N$-dependence of $X$ is simply given by the usual
$\frac{1}{N}$-factor, whereas the $\al$-dependence is
determined by the function $\Ua$ given in Eq.~\gl{Udef}.
Note that if a two-sided estimate $\dfhatopt$ is calculated, then
essentially the normalizing
constant $\Ua$ is estimated from two sides, $0$ and $1$, cf.\ 
Eqs.~\gl{Uforms} and \gl{benest}.
With an estimate $\dfhatopt$ we therefore always have an estimate of
the mean square error at hand.
However, the reliability of the latter naturally depends on the degree
of convergence of the estimate
$\dfhatopt$. The convergence of the two-sided estimate can be checked with the
convergence measure introduced in Ref.~\cite{Hahn2009}.

In the limits $\al=\frac{n_0}{N}\to1$ and $\al\to0$, respectively,
the asymptotic mean square error $X$
of the two-sided estimator converges to the asymptotic
mean square error of the appropriate one-sided
estimator \cite{Gore2003},
\begin{align}\label{Mf}
  \lim\limits_{\al\rightarrow 1}{X(N,\al)}
  = \frac{1}{N} \varf{\frac{\pr}{\pf}}=\frac{1}{N} \varf{e^{-w+\df}}
\end{align}
and
\begin{align}\label{Mr}
  \lim\limits_{\al\rightarrow 0}{X(N,\al)}
  = \frac{1}{N} \varr{\frac{\pf}{\pr}}=\frac{1}{N} \varr{e^{w-\df}},
\end{align}
where
$\var_\gamma$ denotes the variance operator with respect to the density
$p_\gamma$, i.e.\ 
\begin{align}
  \var_\gamma(g) = \mean{ \left(g-\mean{g}{\gamma} \right)^2}{ \gamma}
\end{align}
for an arbitrary function $g(w)$ and $\gamma\in[0,1]$.

\section{The optimal ratio}\label{sec:3}

Now we focus on the question raised in the introduction:
Which value $\al_o$ of $\al$ in the
range $[0,1]$ minimizes the mean square error
\gl{mse} when the total sample size, $N=n_0+n_1$, is held fixed?

Let $M$ be the rescaled asymptotic mean square error given by
\begin{align}\label{Mdef}
  M(\al) = N\cdot X(N,\al),
\end{align}
which is a function of $\al$ only. Assuming $\al_o\in(0,1)$, a
necessary condition
for a minimum of $M$ is that the derivative $M'(\al)=\frac{dM}{d\al}$
of $M$ vanishes
at $\al_o$.
Before calculating $M'$ explicitly, it is beneficial to rewrite $M$ by
using the identity
\begin{multline}\label{Uforms2}
  \Ua = \int\limits_{\Omega}
  \frac{\pf\pr\left(\al\pf+\be\pr\right)}{\left(\al\pf+\be\pr\right)^2}dw \\
  = \al \meanr{  \frac{\pf^2}{\left(\al\pf+\be\pr\right)^2} }
  + \be \meanf{ \frac{\pr^2}{\left(\al\pf+\be\pr\right)^2} }.
\end{multline}
Subtracting $(\al+\be)\Ua^2=\Ua^2$ from Eq.~\gl{Uforms2} and recalling the
definition \gl{pa} of $\pa$, one obtains
\begin{align}\label{varid}
  \Ua\left( 1-\Ua \right)
  = \left[\al \theta_1(\al) + \be \theta_0(\al)\right] \Ua^2,
\end{align}
where the functions $\theta_i$ are defined as
\begin{align}\label{thetadef}
  &\theta_1(\al) = \varr{\frac{\pa}{\pr}}
  = \frac{1}{\Ua^2} \varr{ \frac{1}{ \al+\be e^{-w+\df} } },\nonumber \\
  &\theta_0(\al) = \varf{\frac{\pa}{\pf}}
  = \frac{1}{\Ua^2} \varf{ \frac{1}{ \be+\al e^{w-\df} } }.
\end{align}
$\theta_0$ and $\theta_1$ describe the relative fluctuations of the
quantities that are averaged in the two-sided estimation of $\df$,
cf.\ Eq.~\gl{benest}.

With the use of formula \gl{varid}, $M$ can be written
\begin{align}\label{M1}
  M(\al) = \frac{\theta_0(\al)}{\al} + \frac{\theta_1(\al)}{\be}
\end{align}
and the derivative yields
\begin{align}\label{M'}
  M'(\al) = \frac{\theta_1(\al)}{\be^2}-\frac{\theta_0(\al)}{\al^2}
  + \frac{\be\theta_0'(\al)+\al\theta_1'(\al) }{\al\be}.
\end{align}
The derivatives of the $\theta$-functions involve
the first two derivatives of $\Ua$, which will thus be computed first:
\begin{align}\label{U'}
  \Ua' := \frac{d}{d\al}\Ua = \int\limits_{\Omega}
  \frac{\pf\pr\left( \pr-\pf \right)}{\left( \al\pf + \be\pr \right)^2} dw
\end{align}
and
\begin{align}\label{U''}
  \Ua'' := \frac{d^2}{d\al^2}\Ua = 2\int\limits_{\Omega}
  \frac{\pf\pr\left( \pr-\pf \right)^2}{\left( \al\pf + \be\pr \right)^3} dw.
\end{align}
From this equation it is clear that $\Ua$ is convex in $\al$, $\Ua'' \geq 0$,
with a unique minimum in $(0,1)$ (as $U_0=U_1=1$). We can rewrite the
$\theta$-functions
with $\Ua$ and $\Ua'$ as follows:
\begin{align}\label{thetaforms}
  \theta_1(\al) = \frac{\Ua-\be\Ua'}{\Ua^2} -1, \nonumber \\
  \theta_0(\al) = \frac{\Ua+\al\Ua'}{\Ua^2} -1.
\end{align}
Differentiating these expressions gives
\begin{align}\label{theta'}
  &\theta_1'(\al) = -\frac{\be}{\Ua^3}\left(\Ua''\Ua - 2\Ua'^2 \right),
  \nonumber \\
  &\theta_0'(\al) = \frac{\al}{\Ua^3}\left(\Ua''\Ua - 2\Ua'^2 \right).
\end{align}
$\theta_0$ and $\theta_1$ are monotonically increasing and decreasing,
respectively. This immediately follows from writing the term
occurring in the brackets of Eqs.~\gl{theta'} as
a variance in the density $\pa$,
\begin{align}
  \Ua''\Ua - 2\Ua'^2 = 2 \varal{\frac{\pr-\pf}{\al\pf+\be\pr}} \Ua^2,
\end{align}
which is thus positive.

As a consequence of Eq.~\gl{theta'}, the relation
\begin{align}\label{theta'rel}
  \be\theta_0'(\al)+\al\theta_1'(\al)=0 \quad \forall \al\in[0,1]
\end{align}
holds and $M'$ reduces to
\begin{align}\label{M'2}
  M'(\al) =  \frac{\theta_1(\al)}{\be^2}-\frac{\theta_0(\al)}{\al^2} .
\end{align}
The derivatives of the $\theta$-functions do not
contribute to $M'$ due to the fact that the special form of
the two-sided estimator
\gl{benest} originates from minimizing the asymptotic mean square
error, cf.~\cite{Bennett1976}. The necessary condition for a local
minimum of $M$
at $\al_o$, $M'(\al_o)=0$, now reads
\begin{align}\label{mincond}
  \frac{\be_o^2}{\al_o^2} = \frac{\theta_1(\al_o)}{\theta_0(\al_o)},
\end{align}
where $\be_o=1-\al_o$ is introduced. Using Eqs.~\gl{thetadef} and \gl{fth},
the condition \gl{mincond} results in
\begin{align}\label{mincond2}
  \varr{\frac{1}{1+e^{-w+\df+\ln r_o}}}
  = \varf{\frac{1}{1+e^{w-\df-\ln r_o}}}.
\end{align}
This means, the
optimal ratio $r_o$ is such that the variances of the random functions
which are averaged in the two-sided estimation \gl{benestfam} are equal.
However, the existence of a solution of $M'(\al)=0$ is not guaranteed in
general.

Writing Eq.~\gl{mincond2} in the form
\begin{align} \label{mincond3}
  \varr{\frac{p_1-p_0}{\al p_0+\be p_1}}
  = \varf{\frac{p_1-p_0}{\al p_0+\be p_1}}
\end{align}
prevents the equation from becoming a tautology.

\section{Convexity of the mean square error}\label{sec:4}

\begin{thm}
  The asymptotic mean square error $M(\al)$ is convex in $\al$.
\end{thm}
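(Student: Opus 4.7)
The plan is to establish convexity by a data-pooling argument that exploits the minimality of Bennett's asymptotic MSE stated in Sec.~\ref{sec:2}. Fix $\al_1,\al_2\in(0,1)$ and $\lambda\in(0,1)$, and choose positive integers $N_1,N_2$ with $\lambda=N_1/(N_1+N_2)$. Imagine two independent experiments: the $i$-th one generates $\al_i N_i$ forward and $(1-\al_i)N_i$ reverse work values and yields Bennett's two-sided estimate $\dfhat^{(i)}$ of $\df$, asymptotically normal with MSE $M(\al_i)/N_i$ by \gl{mse}. Being independent, they can be merged by inverse-variance weighting into
\[
\dfhat_{\mathrm{c}}=\frac{(N_1/M(\al_1))\,\dfhat^{(1)}+(N_2/M(\al_2))\,\dfhat^{(2)}}{N_1/M(\al_1)+N_2/M(\al_2)},
\]
an asymptotically unbiased estimator of $\df$ whose asymptotic MSE equals $[N_1/M(\al_1)+N_2/M(\al_2)]^{-1}$.

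Now regard the union of the two samples as a single data set of $N=N_1+N_2$ draws with overall forward fraction $\al=\lambda\al_1+(1-\lambda)\al_2$. Applying Bennett's estimator $\dfhatopt$ to the pool yields MSE $M(\al)/N$, which is minimal among the admissible two-sided estimators built from that data by the analysis of Sec.~\ref{sec:2}. Since $\dfhat_{\mathrm{c}}$ is itself an asymptotically unbiased two-sided construction from the pooled data, it cannot do better:
\[
\frac{M(\lambda\al_1+(1-\lambda)\al_2)}{N_1+N_2}\;\leq\;\frac{1}{N_1/M(\al_1)+N_2/M(\al_2)}.
\]
Multiplying through by $N_1+N_2$ and inverting gives concavity of $1/M$ on $(0,1)$,
\[
\frac{1}{M(\lambda\al_1+(1-\lambda)\al_2)}\;\geq\;\lambda\,\frac{1}{M(\al_1)}+(1-\lambda)\,\frac{1}{M(\al_2)}.
\]
Composing the concave positive function $1/M$ with the convex decreasing map $x\mapsto 1/x$ on $(0,\infty)$ then shows that $M(\al)$ is convex on $(0,1)$; the boundary values $\al\in\{0,1\}$ are handled by the limits \gl{Mf} and \gl{Mr}.

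The main obstacle is to justify rigorously the key inequality above, namely that no combination of Bennett estimators from disjoint subsamples can outperform Bennett applied to the pool. Intuitively pooling cannot hurt, but formally this requires Bennett's minimality in Sec.~\ref{sec:2} to apply not only within the single-ratio family of weighted two-sided estimators but also against affine combinations of such estimators across disjoint data sets. One way to secure this is to invoke the asymptotic efficiency of Bennett's estimator for the density-ratio model encoded by \gl{fth}, so that $M(\al)/N$ attains the semi-parametric Cram\'er--Rao bound and is thus dominated by the MSE of every asymptotically unbiased estimator of $\df$ built from the same data \cite{Bennett1976,Meng1996}. A purely algebraic alternative is to differentiate $M'(\al)=\theta_1/\be^2-\theta_0/\al^2$ in \gl{M'2} once more and show the resulting expression
$M''(\al)=2\theta_0/\al^3+2\theta_1/\be^3-(U_\al''U_\al-2(U_\al')^2)/(\al\be\,U_\al^3)$
to be non-negative using the variance identity $U_\al''U_\al-2(U_\al')^2=2U_\al^2\,\varal{(\pr-\pf)/(\al\pf+\be\pr)}$ together with a Cauchy--Schwarz estimate relating this variance to $\theta_0$ and $\theta_1$; this route is considerably more algebraically involved.
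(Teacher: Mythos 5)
Your pooling strategy is appealing and genuinely different from the proof in the paper, but as written it has a real gap at exactly the step you flag. The inequality you need --- that Bennett's estimator applied to the pooled sample dominates the inverse-variance combination $\widehat{\Delta f}_{\mathrm{c}}$ of the two sub-sample estimates --- does not follow from the optimality invoked in Sec.~\ref{sec:2}. There, Bennett's choice \gl{alphabeta} is optimal only \emph{within} the one-parameter family of two-sided estimators \gl{benest} (i.e.\ over the weighting parameter for fixed sample sizes); $\widehat{\Delta f}_{\mathrm{c}}$ is not a member of that family, so nothing stated in the paper lets you compare it with $\dfhatopt$ on the pool. To close the gap you must upgrade the optimality to full semiparametric efficiency: that $M(\al)/N$ in \gl{mse} attains the asymptotic information bound among all \emph{regular} asymptotically linear estimators of $\df$ under the model \gl{fth} (regularity, not mere asymptotic unbiasedness, is needed to rule out superefficiency). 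This is in fact true for the two-sample biased-sampling model (Gill--Vardi--Wellner; Kong et al.\ \cite{Kong2003}), but it is a substantial external theorem, not something you can borrow from \cite{Bennett1976} or from Sec.~\ref{sec:2} as stated. Granting it, the rest of your argument is sound: independence gives the variance of $\widehat{\Delta f}_{\mathrm{c}}$, rational $\lambda$ plus continuity of $M$ on $(0,1)$ gives concavity of $1/M$, the boundary is handled by \gl{Mf} and \gl{Mr}, and composing the positive concave $1/M$ with the convex decreasing map $x\mapsto 1/x$ gives convexity of $M$.

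It is worth comparing what the two routes buy. The paper's proof is elementary and self-contained: it introduces the operator $\Gamma_\al(f)=\be\varf{f}+\al\varr{f}-\Ua\varal{f}$, proves $\Gamma_\al\ge 0$ by completing a square under the integral, and identifies $\tfrac{1}{2}\al\be\,\Ua^2M''(\al)=\Gamma_\al\bigl(\tfrac{\pr-\pf}{\al\pf+\be\pr}\bigr)$ --- this positivity lemma is precisely the ``Cauchy--Schwarz estimate'' your algebraic fallback gestures at but does not carry out. Your statistical route, if the efficiency input is supplied, actually proves more than the Theorem: concavity of $1/M$, i.e.\ superadditivity of the efficient information under pooling, which implies convexity of $M$ but is strictly stronger. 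So either finish the algebraic computation (which reproduces the paper's Lemma) or state and invoke the efficiency theorem explicitly; as it stands, the central domination inequality is asserted rather than proved.
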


In order to prove the convexity, we introduce the operator
$\Gamma_\al\left(f\right)$ which is defined for an arbitrary function
$f(w)$ by
\begin{align}
  \Gamma_\al\left( f\right)= \be \varf{f} + \al \varr{f} - \Ua \varal{f}.
\end{align}

\begin{lem}
  $\Gamma_\al$ is positive semidefinite, i.e.\ 
  \begin{align}
    \Gamma_\al(f) \geq 0 \quad \forall f(w).
  \end{align}
  For $\al\in(0,1)$ and $f(w)\neq \textit{const.}$, the equality holds
  if and only if $\pf\equiv \pr$.
\end{lem}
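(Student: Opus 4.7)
\medskip\noindent\textbf{Proof plan.}
The plan is to split $\Gamma_\al(f)$ into second-moment and first-moment contributions and bound each separately. Using $\var_\gamma(f)=\mean{f^2}{\gamma}-\mean{f}{\gamma}^2$ together with the purely algebraic identity $(\be\pf+\al\pr)(\al\pf+\be\pr)-\pf\pr=\al\be(\pf-\pr)^2$ and the definition $\Ua\pa=\pf\pr/(\al\pf+\be\pr)$ recorded in \gl{pa}, I would first rewrite
\begin{align*}
\be\meanf{f^2}+\al\meanr{f^2}-\Ua\meanal{f^2}=\al\be\!\int\!\frac{f^2(\pf-\pr)^2}{\al\pf+\be\pr}\,dw,
\end{align*}
which is manifestly non-negative. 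The real work will lie in handling the first-moment terms and in combining them with this quantity.

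To organize those terms I would change variables to the probability measure $d\nu=(\al\pf+\be\pr)\,dw$ and the function $\phi=(\pf-\pr)/(\al\pf+\be\pr)$, for which $\pf\,dw=(1+\be\phi)\,d\nu$, $\pr\,dw=(1-\al\phi)\,d\nu$, $\int\phi\,d\nu=0$ (as $\pf,\pr$ are both normalized), and $\Ua=1-\al\be\!\int\phi^2\,d\nu$. Setting $m=\!\int\! f\,d\nu$, $n=\!\int\! f\phi\,d\nu$, $\ell=\!\int\! f\phi^2\,d\nu$, $V=\!\int\phi^2\,d\nu$, these relations give $\meanf{f}=m+\be n$, $\meanr{f}=m-\al n$, and $\Ua\meanal{f}=m+(\be-\al)n-\al\be\ell$. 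Using the law-of-total-variance identity $\be\meanf{f}^2+\al\meanr{f}^2=(\be\meanf{f}+\al\meanr{f})^2+\al\be(\meanf{f}-\meanr{f})^2$, clearing $1/\Ua$ from $\Ua\meanal{f}^2$, and completing the square in $r:=m+(\be-\al)n=\be\meanf{f}+\al\meanr{f}$ would reduce the statement to
\begin{align*}
\Ua\,\Gamma_\al(f)=\al\be\,\Ua\Bigl[\!\int\! f^2\phi^2\,d\nu-n^2-\tfrac{\ell^2}{V}\Bigr]+\al\be\,V\Bigl(r-\tfrac{\ell}{V}\Bigr)^{\!2}.
\end{align*}

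The second summand is a manifest square, so I expect the remaining bracket to be the one delicate step. A naive Cauchy--Schwarz $\ell^2\le V\!\int\! f^2\phi^2\,d\nu$ is just too weak, so the key move is to exploit $\int\phi\,d\nu=0$: since $\ell=\!\int\!(f\phi-n)\phi\,d\nu$, Cauchy--Schwarz in $L^2(\nu)$ yields $\ell^2\le\bigl(\!\int\!(f\phi)^2 d\nu-n^2\bigr)V$, which is exactly what the bracket requires. Because $\Ua>0$ for $\al\in(0,1)$, this proves $\Gamma_\al(f)\ge 0$, with the boundary cases $\al\in\{0,1\}$ trivial since $\pa\equiv\pf$ or $\pa\equiv\pr$ and $\Ua=1$ there. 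For the equality clause, the square vanishes iff $Vr=\ell$ and the Cauchy--Schwarz saturates iff $f\phi-n$ is proportional to $\phi$ in $L^2(\nu)$; combined with the non-degeneracy of continuous work densities these two conditions force $\phi\equiv 0$, i.e.\ $\pf\equiv\pr$.
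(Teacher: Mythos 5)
Your argument is correct, but it takes a genuinely different route from the paper's. The paper proves the Lemma in one stroke: writing $\deltf_\gamma=f-\la f\ra_\gamma$ and putting everything over the common denominator $\al\pf+\be\pr$, it completes the square in the numerator to obtain the exact decomposition
\begin{align*}
\Gamma_\al(f)=\al\be\int\limits_{\Omega}\frac{\left(\deltf_1\pr-\deltf_0\pf\right)^2}{\al\pf+\be\pr}\,dw
+\Ua\left(\be\la f\ra_0+\al\la f\ra_1-\la f\ra_\al\right)^2,
\end{align*}
which is manifestly non-negative with no further inequality needed. You instead separate second- and first-moment contributions, pass to the mixture measure $d\nu=(\al\pf+\be\pr)\,dw$ and the relative coordinate $\phi=(\pf-\pr)/(\al\pf+\be\pr)$ (essentially the $p$, $d$ coordinates the paper introduces only later, in the proof of the Theorem), and close the argument with a Cauchy--Schwarz inequality against the zero-mean function $\phi$. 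I have checked your intermediate identities ($\Ua=1-\al\be V$, the expressions for $\meanf{f}$, $\meanr{f}$, $\Ua\meanal{f}$, and the final decomposition of $\Ua\,\Gamma_\al(f)$) and they are all correct; so is the sharpened Cauchy--Schwarz bound $\ell^2\le\bigl(\int f^2\phi^2\,d\nu-n^2\bigr)V$. The paper's route is shorter and avoids two minor blemishes of yours: the division by $V$ requires the (trivial) side remark that $V=0$ already means $\pf\equiv\pr$ and $\Gamma_\al(f)=0$, and your equality analysis is the weakest step --- saturation of Cauchy--Schwarz gives $(f-c)\phi=n$ $\nu$-a.e., which forces $\phi\equiv0$ only after one argues that $n\neq0$ is incompatible with $\int\phi\,d\nu=0$ and finite variances (near a zero of $\phi$ one would need $f=c+n/\phi$ to blow up non-square-integrably). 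That said, the paper's own equality discussion is equally terse, and your decomposition has the compensating virtue of isolating exactly where the positivity comes from in the coordinates that the subsequent convexity proof actually uses.
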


\begin{proof}[Proof of the Lemma.]
  Let $\deltf_\gamma = f(w) - \la f \ra_\gamma$, $\gamma\in[0,1]$.
  Then
  \begin{multline}
    \Gamma_\al\left(f\right)= \int\limits_{\Omega} \left(\be \deltf_0^2 \pf
    +\al \deltf_1^2 \pr - \deltf_\al^2 \frac{\pf\pr}{\al\pf+\be\pr} \right)dw\\
    = \int\limits_{\Omega}
    \frac{ \left(\be \deltf_0^2 \pf +\al \deltf_1^2 \pr\right)
      \left(\al\pf +\be\pr \right) - \deltf_\al^2 \pf\pr }{\al\pf+\be\pr}dw \\
    = \al\be \int\limits_{\Omega}
    \frac{\left(\deltf_1\pr-\deltf_0 \pf \right)^2}{\al\pf+\be\pr} dw \\
    + \Ua \left(\be\la f \ra_0 + \al \la f \ra_1 - \la f \ra_\al\right)^2,
  \end{multline}
  which is clearly positive. Provided $f\neq\textit{const.}$ and
  $\al\neq 0,1$, the integrand
  in the last line is zero $\forall w$ if and only if $\pf\equiv\pr$.
  This completes the proof of the Lemma.
\end{proof}

\begin{proof}[Proof of the Theorem.]
  Consulting Eqs.~\gl{M'2} and \gl{theta'rel},
  the second derivative of $M$ reads
  \begin{align}\label{M''}
    M''(\al) =  2 \left( \frac{\theta_1(\al)}{\be^3}
    + \frac{\theta_0(\al)}{\al^3}\right)
    - \frac{1}{\al^2\be} \theta_0'(\al).
  \end{align}
  Expressing $\pf=p-\be d$ and $\pr=p+\al d$ in center- and relative
  ``coordinates'' $p=\al\pf+\be\pr$ and $d=\pr-\pf$, respectively, gives
  \begin{align}
    &\theta_1(\al) = \frac{1}{\Ua^2}\varr{\frac{\pf}{p}}
    = \frac{\be^2}{\Ua^2}\varr{\frac{d}{p}},\nonumber \\
    &\theta_0(\al) = \frac{1}{\Ua^2}\varf{\frac{\pr}{p}}
    = \frac{\al^2}{\Ua^2}\varf{\frac{d}{p}},\nonumber \\
    &\theta_0'(\al) = \frac{2\al}{\Ua}\varal{\frac{d}{p}}.
  \end{align}
  Therefore, $\frac{1}{2}\al\be \Ua^2M''= \Gamma_\al\big( \frac{d}{p}\big)$,
  which is positive according to the Lemma.
\end{proof}

The convexity of the mean square error is a fundamental characteristic
of Bennett's acceptance ratio method. This characteristic allows us to state a
simple criterion for the existence of a \textit{local} minimum of the mean
square error in terms of its derivatives at the boundaries. Namely, if
\begin{align} \label{Mp0}
  M'(0) = \varr{e^{w-\df}} - \varf{e^{w-\df}}
\end{align}
is negative and
\begin{align} \label{Mp1}
  M'(1) = \varr{e^{-w+\df}} - \varf{e^{-w+\df}}
\end{align}
is positive there exists a local minimum of $M(\al)$ for $\al\in(0,1)$.
Otherwise, no local minimum exists and the global minimum
is found on the boundaries of $\al$: if $M'(0)> 0$, the global minimum
is found for $\al=0$, thus it is optimal to measure work
values in the reverse direction only and to use the one-sided reverse
estimator \gl{dfhatr}. Else, if $M'(1)<0$, the global minimum
is found for $\al=1$, implying the one-sided forward estimator \gl{dfhatf}
to be optimal.

In addition, the convexity of the mean square error proves the
existence and uniqueness of the optimal ratio, since a convex
function has a global minimum on a closed interval.

\begin{cor}
  If a solution of $M'(\al)=0$ exists, it is unique and $M(\al)$ attains its
  global minimum ($\al\in[0,1]$) there.
\end{cor}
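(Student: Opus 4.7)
The plan is to read the Corollary as a direct consequence of the convexity already established in the Theorem, with one extra ingredient: strict convexity on $(0,1)$. I first invoke the standard fact from convex analysis that on a closed interval every critical point of a convex function is a global minimiser. Thus any $\al_o\in(0,1)$ with $M'(\al_o)=0$ automatically satisfies $M(\al_o)\le M(\al)$ for all $\al\in[0,1]$, which delivers the "global minimum" half of the statement for free.

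For uniqueness I need to sharpen $M''\ge 0$ to $M''>0$ on $(0,1)$. I would revisit the identity obtained in the proof of the Theorem, $\tfrac{1}{2}\al\be\,\Ua^2 M''(\al)=\Gamma_\al(d/p)$ with $d=\pr-\pf$ and $p=\al\pf+\be\pr$, and apply the equality clause of the Lemma. The Lemma says $\Gamma_\al(d/p)\ge 0$, with equality (for $\al\in(0,1)$) only if either $d/p$ is constant on $\Omega$ or $\pf\equiv\pr$. A one-line integration argument eliminates the first option: if $d/p\equiv c$ pointwise, then $\pr-\pf=c(\al\pf+\be\pr)$; integrating over $\Omega$ makes the left-hand side vanish by normalisation and the right-hand side equal $c$, forcing $c=0$ and hence $\pf\equiv\pr$. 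Therefore $M''>0$ strictly on $(0,1)$ whenever $\pf\not\equiv\pr$, so $M$ is strictly convex, which means $M'$ is strictly increasing and can vanish at most once.

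Combining the two ingredients gives the Corollary: if a root $\al_o$ of $M'=0$ exists at all, strict convexity makes it unique, and convexity makes $M(\al_o)$ the global minimum of $M$ on $[0,1]$. The degenerate case $\pf\equiv\pr$ corresponds via the fluctuation theorem \gl{fth} to $\df=0$ and makes $M\equiv 0$ on $(0,1)$, so the conclusion still holds, albeit trivially, with every $\al$ being both a root of $M'$ and a minimiser.

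I expect the only step needing any thought to be the upgrade from $M''\ge 0$ to $M''>0$, i.e.\ extracting strict convexity from the equality clause of the Lemma by ruling out "$d/p$ constant but $\pf\not\equiv\pr$." Everything else is routine convex analysis on a closed interval.
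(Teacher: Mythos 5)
Your proposal is correct and follows the same route as the paper, which presents the Corollary as an immediate consequence of the Theorem's convexity; your upgrade from $M''\ge 0$ to $M''>0$ on $(0,1)$ via the equality clause of the Lemma, together with the integration argument ruling out a nonzero constant value of $d/p$, supplies exactly the uniqueness step that the paper leaves implicit. One small quibble: in the degenerate case $\pf\equiv\pr$ one has $M\equiv 0$ on $(0,1)$, so the root of $M'$ is \emph{not} unique there; the Corollary (like the paper) tacitly excludes this trivial case, and your closing claim that the conclusion ``still holds, albeit trivially'' should instead say that uniqueness is vacuous or that the case is excluded.
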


\section{Estimating the optimal ratio with first moments}\label{sec:5}

In situations of practical interest the optimal ratio is not available
\textit{a priori}. Thus, we are going to estimate the optimal ratio.
There exist estimators of the optimal ratio since Bennett. In addition
we have just proven that the optimal ratio exists and is unique.
However there is still one obstacle to overcome. Yet, all expressions
for estimating the optimal ratio are based on second moments,
see e.g.\ Eq.~\gl{mincond2}.
Due to convergence issues, it is not practicable to base any estimator
on expressions that involve second moments. The estimator would converge
far too slowly. For this reason, we transform
the problem into a form that employs first moments, only.

Assume we have given $n_0$ and $n_1$ work values in forward
and reverse direction, respectively, and want to estimate $U_a$,
with $0\le a\le1$.
According to Eq.~\gl{Uforms} we can estimate the overlap measure $U_a$
by using draws from the forward direction,
\begin{align} \label{eU0}
  \Uhat_a^{(0)} =
  \frac{1}{n_0} \sum\limits_{k=1}^{n_0} \frac{1}{b + a e^{w^0_k-\dfhat}},
\end{align}
where $b$ equals $1-a$ and for $\dfhat$ the best available estimate of
$\df$ is inserted, i.e.\ the two-sided estimate based on the $n_0+n_1$
work values.
Similarly, we can estimate the overlap measure by using draws from the
reverse direction,
\begin{align} \label{eU1}
  \Uhat_a^{(1)} =
  \frac{1}{n_1} \sum\limits_{l=1}^{n_1} \frac{1}{a + b e^{-w^1_l+\dfhat}}.
\end{align}
Since in general draws from both directions are available, it is reasonable
to take an arithmetic mean of both estimates
\begin{align} \label{eU}
  \Uhat_a = a \Uhat_a^{(1)} + b \Uhat_a^{(0)},
\end{align}
where the weighting is chosen such that the better estimate, $\Uhat_a^{(0)}$
or $\Uhat_a^{(1)}$, contributes stronger: with increasing $a$ the estimate
$\Uhat_a^{(1)}$ becomes more reliable, as $U_a$ is the normalizing constant of
the bridging density $p_a$, Eq.~\gl{pa}, and
$p_a \xrightarrow{a\rightarrow 1} \pr $; and conversely for decreasing $a$.

From the estimate of the overlap measure we can estimate the
rescaled mean square error by
\begin{align} \label{Mhat}
  \Mhat(a) = \frac{1}{ab} \left( \frac{1}{\Uhat_a} - 1 \right)
\end{align}
for all $a\in(0,1)$, a result that is entirely based on first moments.
The infimum of $\Mhat(a)$ finally results in an estimate $\alhat_o$ of the
optimal choice $\al_o$ of $\frac{n_0}{N}$,
\begin{align} \label{orfm}
  \alhat_o \quad :\Leftrightarrow \quad \Mhat(\alhat_o) = \inf_{a} \Mhat(a).
\end{align}
When searching for the infimum, we also take
\begin{align} \label{Mat0}
  \Mhat(0) &= \frac{1}{n_0} \sum_{k=1}^{n_0} e^{w_k^{(0)}-\dfhat}
  - \frac{1}{n_1} \sum_{l=1}^{n_1} e^{w_l^{(1)}-\dfhat},
  \\
  \Mhat(1) &= \frac{1}{n_1} \sum_{l=1}^{n_1} e^{-w_l^{(1)}+\dfhat}
  - \frac{1}{n_0} \sum_{k=1}^{n_0} e^{-w_k^{(0)}+\dfhat} \nonumber
\end{align}
into account which follow from a series expansion of Eq.~\gl{Mhat} in
$a$ at $a=0$ and $a=1$, respectively.

\section{Incorporating costs}\label{sec:6}

The costs of measuring a work value in forward direction may differ
from the costs of measuring a work value in reverse direction.
The influence of costs on the optimal ratio of sample sizes is
investigated here.

Different costs can be due to a direction dependent
effort of experimental or computational measurement
of work (unfolding a RNA may be much easier than folding
it). We assume the work values to be uncorrelated, which
is essential for the validity of the theory presented in
this paper. Thus, a source of nonequal costs, which arises
especially when work values are obtained via computer
simulations, is the difference in the strength of
correlations of consecutive Monte-Carlo steps in forward and
reverse direction. To achieve uncorrelated draws, the
``correlation-lengths'' or ``correlation-times''
have to be determined within the simulation, too.
However, this is advisable in any case of two-sided
estimation, independent of the sampling strategy.  

Let $c_0$ and $c_1$ be the costs of drawing a single forward and
reverse work value, respectively.
Our goal is to minimize the mean square error
$X=\frac{1}{N}M$ while keeping the total costs $c=n_0c_0+n_1c_1$
constant. Keeping $c$ constant results in
\begin{align} \label{Ncc}
  N(c,\al) = \frac{c}{\al c_0 + \be c_1}
\end{align}
which in turn yields
\begin{align} \label{Xca}
  X(c,\al) = \frac{1}{N(c,\al)} M(\al).
\end{align}
If a local minimum exists, it results from
$\frac{\partial}{\partial\al}X(c,\al)=0$ which leads to
\begin{align}\label{cmincond}
  \frac{\be_o^2}{\al_o^2} = \frac{c_0\theta_1(\al_o)}{c_1\theta_0(\al_o)},
\end{align}
a result Bennett was already aware of \cite{Bennett1976}. However, based
on second moments, it was not possible to estimate the optimal ratio $r_o$
accurately and reliably. Hence, Bennett proposed to use a suboptimal
\textit{equal time strategy} or \textit{equal cost strategy},
which spends an equal amount of expenses to both directions, i.e.\ 
$n_0c_0=n_1c_1=\frac{c}{2}$ or
\begin{align} \label{ecs}
  \frac{\be_{ec}}{\al_{ec}} = \frac{c_0}{c_1},
\end{align}
where $\al_{ec}=1-\be_{ec}$ is the equal cost choice for $\al=\frac{n_0}{N}$.
This choice is motivated by the following result
\begin{align} \label{Xec}
  X(c,\al) \ge \frac{1}{2} X(c,\al_{ec}) \quad \forall\al\in[0,1]
\end{align}
which states that the asymptotic
mean square error of the equal cost strategy is
at most sub-optimal by a factor of $2$ \cite{Bennett1976}.
Note however that the equal cost strategy can be far more sub-optimal
if the asymptotic limit of large sample sizes is not reached.

Since we can base the estimator for the optimal ratio $r_o$
on first moments, see Sec.~\ref{sec:5}, we propose a
\textit{dynamic strategy}
that performs better than the equal cost strategy.
The infimum of
\begin{align} \label{Xhat}
  \Xhat(c,a) = \frac{a c_0 + b c_1}{c} \Mhat(a)
\end{align}
results in the estimate $\alhat_o$ of the optimal choice
$\al_o$ of $\frac{n_0}{N}$,
\begin{align} \label{orfmc}
  \alhat_o \quad :\Leftrightarrow \quad \Xhat(c,\alhat_o) = \inf_{a} \Xhat(c,a).
\end{align}

We remark that opposed to $M(\al)$, $X(c,\al)$ is not necessarily convex.
However, a global minimum clearly exists and can be estimated.

\section{A dynamic sampling strategy}\label{sec:7}

Suppose we want to estimate the free-energy difference with
the acceptance ratio method, but have a limit on the total amount
of expenses $c$ that can be spend for measurements of work.
In order to maximize the
efficiency, the measurements are
to be performed such that $\frac{n_0}{N}$ finally equals the
optimal fraction $\al_o$ of forward measurements.

The dynamic strategy is as follows:
\begin{enumerate}
\item
  In absence of preknowledge on $\al_o$, we start with
  Bennett's equal cost strategy \gl{ecs} as an initial guess of $\al_o$.
\item
  After drawing a small number of work values we make preliminary
  estimates of the free-energy difference, the mean square error,
  and the optimal fraction $\al_o$.
\item
  Depending on whether the estimated rescaled mean square error
  $\Mhat(a)$ is convex, which is
  a necessary condition for convergence, our algorithm updates
  the estimate $\alhat_o$ of $\al_o$.
\item
  Further work values are drawn such that $\frac{n_0}{N}$
  dynamically follows $\alhat_o$, while $\alhat_o$ is updated
  repeatedly.
\end{enumerate}
There is no need to update $\alhat_o$ after each individual draw.
Splitting the total costs into a sequence $0<c^{(1)}<\ldots<c^{(p)}=c$,
not necessarily equidistant, we can predefine when and how often
an update in $\alhat_o$ is made. Namely, this is done whenever the
actually spent costs reach the next value $c^{(\nu)}$ of the sequence.

The dynamic strategy can be cast into an algorithm.
\begin{alg} \rm
  Set the initial values $n_0^{(0)}=n_1^{(0)}=0$,
  $\alhat_o^{(1)}=\al_{ec}$.
  In the $\nu$-th step of the iteration, $\nu=1,\ldots,p$, determine
  \begin{align} \label{en0n1}
    n_0^{(\nu)} &= \lfloor \alhat_o^{(\nu)} N^{(\nu)} \rfloor
    \\
    n_1^{(\nu)} &= \lfloor \behat_o^{(\nu)} N^{(\nu)} \rfloor
    \nonumber
  \end{align}
  with
  \begin{align} \label{eN}
    N^{(\nu)} = \frac{c^{(\nu)}}{\alhat_o^{(\nu)}c_0 + \behat_o^{(\nu)}c_1},
  \end{align}
  where $\lfloor\ \rfloor$ means rounding to the next lower integer.
  Then, $\Delta n_0^{(\nu)}=n_0^{(\nu)}-n_0^{(\nu-1)}$ additional forward
  and $\Delta n_1^{(\nu)}=n_1^{(\nu)}-n_1^{(\nu-1)}$ additional reverse
  work values are drawn. Using the entire present samples, an estimate
  $\dfhat^{(\nu)}$ of $\df$ is calculated according to Eq.~\gl{benest}.
  With the free-energy estimate at hand, $\Mhat^{(\nu)}(a)$ is calculated
  for all values of $a\in[0,1]$ via Eqs.~\gl{eU0}--\gl{Mhat} and \gl{Mat0},
  discretized, say in steps $\Delta a=0.01$\,.
  If $\Mhat^{(\nu)}(a)$ is convex, we update the recent estimate
  $\alhat_o^{(\nu)}$ of $\al_o$ to
  $\alhat_o^{(\nu+1)}$ via Eqs.~\gl{Xhat} and \gl{orfmc}.
  Otherwise, if $\Mhat^{(\nu)}(a)$ is not convex, the corresponding
  estimate of $\al_o$ is not yet reliable and we keep the recent
  value, $\alhat_o^{(\nu+1)}=\alhat_o^{(\nu)}$.
  Increasing $\nu$ by one, we iteratively continue with Eq.~\gl{en0n1}
  until we finally obtain $\dfhat^{(p)}$ which is the optimal estimate of
  the free-energy difference after having spend all costs $c$.

  Note that an update in $\alhat_o^{(\nu)}$ may result in negative
  values of $\Delta n_0^{(\nu)}$ or $\Delta n_1^{(\nu)}$. Should
  $\Delta n_0^{(\nu)}$ happen to be negative, we set
  $n_0^{(\nu)}=n_0^{(\nu-1)}$ and
  \begin{align} \label{n0neg}
    n_1^{(\nu)}
    = \left\lfloor \frac{c^{(\nu)} - c_0 n_0^{(\nu-1)}}{c_1} \right\rfloor.
  \end{align}
  We proceed analogously, if $\Delta n_1^{(\nu)}$ happens to be negative.
\end{alg}

The optimal fraction $\al_o$ depends on the cost ratio $c_1/c_0$, i.e.\ 
the algorithm needs to know the costs $c_0$ and $c_1$. However, the
costs are not always known in advance and may also vary over time.
Think of a long time experiment which is subject to currency changes,
inflation, terms of trade, innovations, and so on. Of advantage is that
the dynamic sampling strategy is capable of incorporating varying costs.
In each iteration step of the algorithm one just inserts the actual costs.
If desired, the breakpoints $c^{(\nu)}$ may also be adapted to the
actual costs. Should the costs initially be unknown (e.g.\ the
``correlation-length'' of a Monte-Carlo simulation needs to be
determined within the simulation first) one may use any reasonable
guess until the costs are known.

\section{An example}\label{sec:8}

For illustration of results we choose exponential work distributions
\begin{align}\label{expdist}
  p_i(w) = \frac{1}{\mu_i} e^{-\frac{w}{\mu_i}},
  \quad w \in \Omega = \mathds{R}^+,
\end{align}
$\mu_i>0$, $i=0,1$. According to the fluctuation theorem
\gl{fth} we have $\mu_1=\frac{\mu_0}{1+\mu_0}$ and $\df=\ln(1+\mu_0)$.

\begin{figure}
  \includegraphics{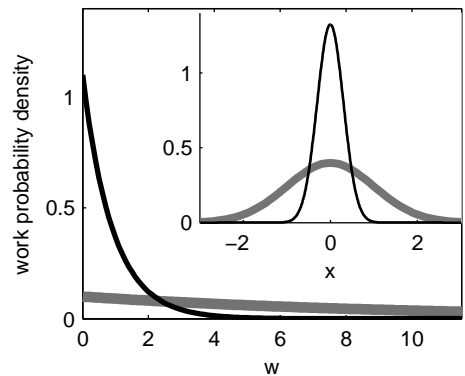}
\caption{\label{fig:2} The main figure displays
    the exponential work densities $\pf$ (thick line) and $\pr$ (thin line)
    for the choice of $\mu_0=10$ and, according to the fluctuation theorem,
    $\mu_1=10/11$. The inset displays the corresponding Boltzmann
    distributions $\rho_0(x,y)$ (thick) and $\rho_1(x,y)$ (thin)
    both for $y=0$. Here, $\omega_0$ is set equal to $1$ arbitrarily, hence
    $\omega_1^2=(1+\mu_0)\omega_0^2 =11$. The free-energy difference
    is $\df=\ln(1+\mu_0)=\ln(\omega_1^2/\omega_0^2)\approx 2.38$\,.}
\end{figure}

Exponential work densities arise in a natural way
in the context of a two-dimensional harmonic oscillator with Boltzmann
distribution $\rho(x,y)=e^{-\frac{1}{2}\omega^2(x^2+y^2)}/Z$,
where $Z=2\pi/\omega^2$ is a normalizing constant (partition function)
and $(x,y)\in\mathds{R}^2$ \cite{Shirts2005}.
Drawing a point $(x,y)$ from the initial
density $\rho=\rho_0$, defined by setting $\omega=\omega_0$, and
switching the frequency to $\omega_1>\omega_0$ instantaneously
amounts in the work $\frac{1}{2}(\omega_1^2-\omega_0^2)(x^2+y^2)$.
The probability density of observing a specific work value $w$ is
given by the exponential density $\pf$ with
$\mu_0=\frac{\omega_1^2-\omega_0^2}{\omega_0^2}$.
Switching the frequency in the reverse direction,
$\omega_1\rightarrow\omega_0$, with the point $(x,y)$ drawn
from $\rho=\rho_1$ with $\omega=\omega_1$, the density of work
(with interchanged sign) is given by $\pr$ with
$\mu_1=\frac{\omega_1^2-\omega_0^2}{\omega_1^2}=\frac{\mu_0}{1+\mu_0}$.
The free-energy difference of the states characterized by $\rho_0$
and $\rho_1$ is the log-ratio of their normalizing constants,
$\df=-\ln\frac{Z_1}{Z_0}=\ln(1+\mu_0)$.
A plot of the work densities for $\mu_0=10$ is enclosed in
Fig.~\ref{fig:2}. 

Now, with regard to free-energy estimation,  is it better to use
one- or two-sided
estimators? In other words, we want to know whether the global minimum
of $M(\al)$ is on the boundaries $\{0,1\}$ of $\al$ or not. By the
convexity of $M$, the answer is determined by the signs of the derivatives
$M'(0)$ and $M'(1)$ at the boundaries. The asymptotic mean square
errors \gl{Mf} and \gl{Mr} of the one-sided estimators are calculated to be
\begin{align}
  M(1) = \varf{e^{-w+\Delta f}} = \frac{\mu_0^2}{1+2\mu_0}
\end{align}
for the forward direction and
\begin{align}
  M(0) = \varr{e^{w-\Delta f}} = \frac{\mu_0^2}{1-\mu_0^2}, \quad \mu_0<1,
\end{align}
for the reverse direction. For $\mu_0\geq1$ the variance of the
reverse estimator diverges.
Note that $M(0)>M(1)$ holds for all $\mu_0>0$, i.e.\ forward estimation of
$\df$ is always superior if compared to reverse estimation.
Furthermore, a straightforward calculation gives
\begin{align}
  M'(1) = \frac{\mu_0^3(\mu_0+\xi_-)(\mu_0-\xi_+)}{(1+2\mu_0)^2(1+3\mu_0)},
\end{align}
where $\xi_{\pm}=\frac{1}{2}(\sqrt{17}\pm3)$, and
\begin{align}
  M'(0) =
  -\frac{\mu_0^3\left(2+(1-2\mu_0)\mu_0\right)}{(1-\mu_0^2)^2(1-2\mu_0)},
  \quad \mu_0<\frac{1}{2},
\end{align}
and $M'(0)=-\infty $ for $\mu_0\geq\frac{1}{2}$. Thus, for the range
$\mu_0\in(0,\xi_+)$ we have $M'(0)<0$ as well as $M'(1)<0$ and therefore
$\al_o=1$, i.e.\ the forward estimator is superior to any two-sided
estimator \gl{benest} in this range.
For $\mu_0\in(\xi_+,\infty)$ we have $M'(0)<0$ and $M'(1)>0$,
specifying that $\al_o\in(0,1)$, i.e.\ two-sided estimation
with an appropriate choice of $\al$ is optimal.

\begin{figure}
  \includegraphics{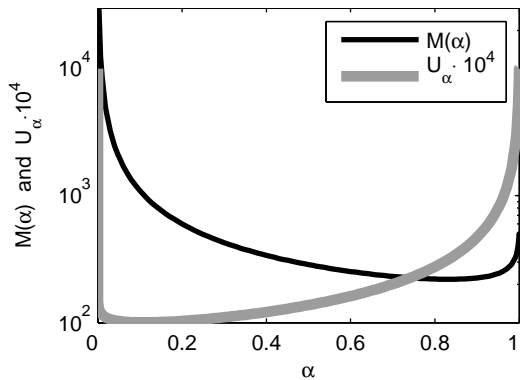}
  \caption{\label{fig:3} The overlap function $U_\al$ and the
    rescaled
    asymptotic mean square error $M$ for $\mu_0=1000$. Note that $M(\al)$
    diverges for $\al\rightarrow 0$.}
\end{figure}

Numerical calculation of the function $\Ua$ and subsequent evaluation
of $M(\al)$ allows to find the ``exact'' optimal fraction $\al_o$.
Examples for $\Ua$ and $M$ are plotted in Fig.~\ref{fig:3}.

\begin{figure}
  \includegraphics{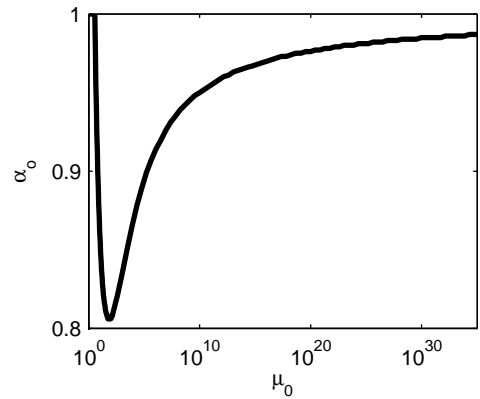}
  \caption{\label{fig:4} The optimal fraction
    $\al_o=\frac{n_0}{N}$ of forward work values for the two-sided
    estimation in dependence of the average forward work $\mu_0$.
    For $\mu_0\leq\xi_+\approx3.56$ the one-sided forward estimator is
    optimal, i.e.\ $\al_o=1$.}
\end{figure}

The behavior of $\al_o$ as a function of $\mu_0$ is quite
interesting, see Fig.~\ref{fig:4}. We can interpret
this behavior in terms of the Boltzmann
distributions as follows. Without loss
of generality, assume $\omega_0=1$ is fixed.
Increasing $\mu_0$ then means increasing
$\omega_1$. The density $\rho_1$ is fully nested in $\rho_0$,
cf.\ the inset of Fig.~\ref{fig:2} (remember that
$\omega_1>\omega_0$) and converges to a delta-peak
at the origin with increasing $\omega_1$.
This means that by sampling from $\rho_0$ we can obtain information
about the full density $\rho_1$ quite easily, whereas sampling
from $\rho_1$ provides only poor information about
$\rho_0$. This explains why $\al_o=1$
holds for small values of $\mu_0$. However, with
increasing $\omega_1$ the density $\rho_1$
becomes so narrow that it becomes difficult to obtain
draws from $\rho_0$ that fall into
the main part of $\rho_1$. Therefore, it is better
to add some information from
$\rho_1$, hence, $\al_o$ decreases. Increasing
$\omega_1$ further, the relative
number of draws needed from $\rho_1$ will decrease,
as the density converges towards
the delta distribution. Finally, it will become sufficient to make
only \textit{one} draw from $\rho_1$
in order to obtain the full information available.
Therefore, $\al_o$ converges towards
$1$ in the limit $\mu_0\rightarrow\infty$.

\begin{figure}
  \includegraphics{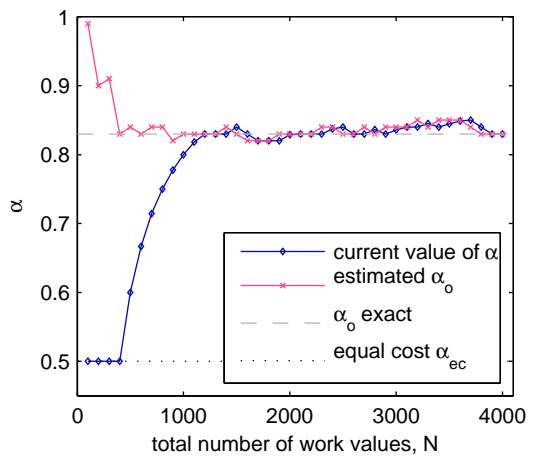}
  \caption{\label{fig:5} (Color online) Example of a single run using the
    dynamic strategy: the optimal fraction $\al_o$ of forward measurements
    for the two-sided free-energy estimation is estimated at
    predetermined values of total sample sizes $N=n_0+n_1$ of
    forward and reverse work values. Subsequently,
    taking into account the current actual fraction $\al=\frac{n_0}{N}$,
    additional work values are drawn such that we come
    closer to the estimated $\alhat_o$.}
\end{figure}

\begin{figure}
  \includegraphics{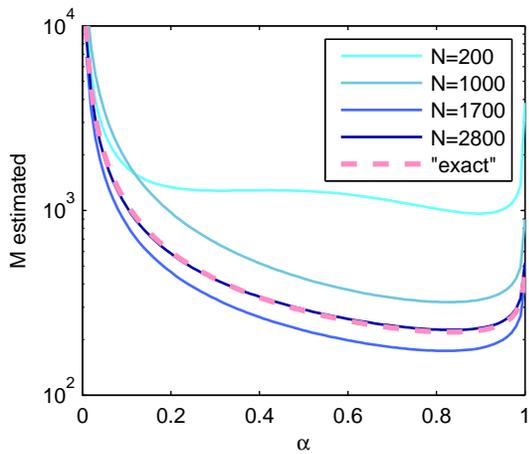}
  \caption{\label{fig:6} (Color online) Displayed are estimated mean square
    errors $\Mhat$
    in dependence of $\al$ for different sample sizes.
    The global minimum of the estimated function $\Mhat$ determines the
    estimate of the optimal fraction $\al_o$ of forward work
    measurements.}
\end{figure}

\begin{figure}
  \includegraphics{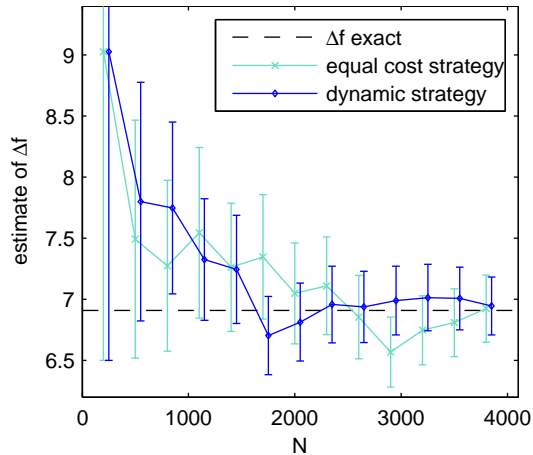}
  \caption{\label{fig:7} (Color online) Comparison of a single run of
    free-energy
    estimation using the equal cost strategy versus a single run
    using the dynamic strategy. The errorbars are the square roots
    of the estimated mean square error $X$.}
\end{figure}

In the following the dynamic strategy proposed in
Sec.~\ref{sec:7} is applied.
We choose $\mu_0=1000$ and $c_0=c_1$. The equal cost strategy draws
according to $\al_{ec}=0.5$ which is used as initial value in the
dynamic strategy. The results of a single run
are presented in Figs.~\ref{fig:5}--\ref{fig:7}. Starting with $N=100$,
the estimate of $\al_o$ is updated in steps of $\Delta N=100$. The actual
forward fractions $\al$ together with the estimated values of the optimal
fraction $\al_o$ are shown in Fig.~\ref{fig:5}.
The first three estimates of $\al_o$ are rejected, because the estimated
function $\Mhat(\al)$ is not yet convex.
Therefore, $\al$ remains unchanged at the beginning. Afterwards, $\al$
follows the estimates of $\al_o$ and starts to fluctuate about the
``exact'' value of $\al_o$.
Some estimates of the function $M$ corresponding to
this run are depicted in Fig.~\ref{fig:6}.
For these estimates $\al$ is discretized in
steps $\Delta\al=0.01$\,. Remarkably, the
estimates of $\al_o$ that result from these curves
are quite accurate even for relatively small $N$.
Finally, Fig.~\ref{fig:7} shows
the free-energy estimates of the run (not for all
values of $N$), compared with
those of a single run where the equal cost strategy is used.
We find some increase of accuracy when using the dynamic strategy.

\begin{figure}
  \includegraphics{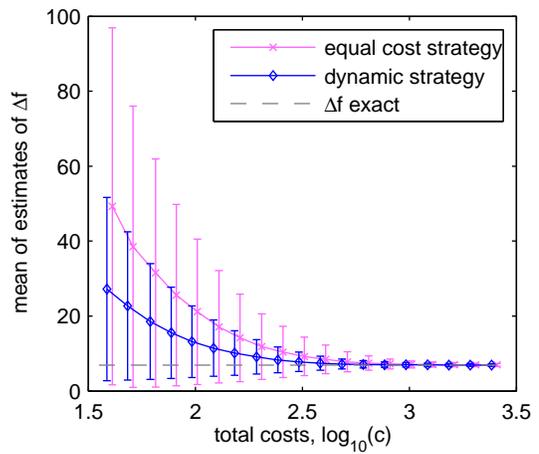}
  \caption{\label{fig:8} (Color online) Averaged estimates from $10\,000$
    independent runs with dynamic strategy versus $10\,000$ runs with
    equal cost strategy in dependence of
    the total cost $c=n_0c_0+n_1c_1$ spend. The cost ratio is $c_1/c_0=0.01$,
    $c_0+c_1=2$, and $\mu_0=1000$. The errorbars represent one standard
    deviation. Here, the initial value of $\al$ in the dynamic strategy is
    $0.5$, while the equal cost strategy draws with $\al_{ec}\approx0.01$.
    We note that $\al_o\approx0.08$.}
\end{figure}

\begin{figure}
  \includegraphics{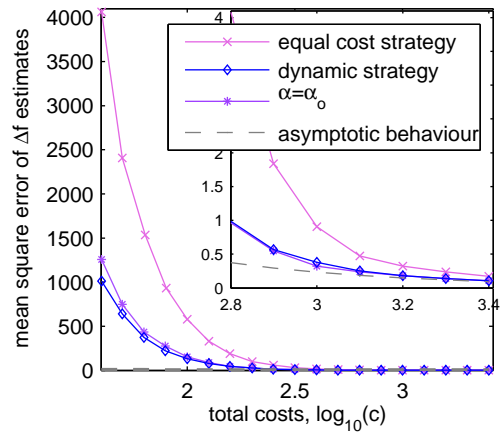}
  \caption{\label{fig:9} (Color online) Displayed are mean square errors
    of free-energy estimates using the same data as in
    Fig.~\ref{fig:8}. In addition, the mean square errors of
    estimates with constant $\al=\al_o$ are included, as well as the
    asymptotic behavior, Eq.~\gl{Xca}.
    The inset shows that the mean square error of the
    dynamic strategy approaches the asymptotic optimum, whereas the equal
    cost strategy is suboptimal. Note that for small sample sizes the
    asymptotic behavior does not represent the actual mean square
    error.}
\end{figure}

In combination with a good a priori choice of the
initial value of $\al$, the use of the dynamic strategy
enables a superior convergence and precision of free-energy
estimation, see Figs.~\ref{fig:8} and \ref{fig:9}.
Due to insight into some particular system under consideration, it is
not unusual that one has a priori knowledge which results in a better
guess for the initial choice of $\al$ in the dynamic
strategy than starting with $\al=\al_{ec}$.
For instance, a good initial choice
is known when estimating the chemical potential via Widom's
particle insertion and deletion \cite{Widom1963}. Namely, it is
a priori clear that inserting particles yields much more information
then deleting particles, since the phase-space which is accessible
to particles in the ``deletion-system" is effectively contained
in the phase-space accessible to the particles in the ``insertion-system",
cf.\ e.g.~\cite{Hahn2009}. A good a priori initial choice for $\al$ may be
$\al=0.9$ with which the dynamic strategy outperforms any other
strategy that the authors are aware of.

Once reaching the limit of large sample sizes, the dynamic strategy
is insensitive to the initial choice of $\al$, since the
strategy is robust and finds the optimal fraction $\al_o$
of forward measurements itself.

\section{Conclusion}\label{sec:9}

Two-sided free-energy estimation, i.e.\ the acceptance ratio method
\cite{Bennett1976},
employs samples of $n_0$ forward and $n_1$ reverse work measurements
in the determination of free-energy differences in a statistically
optimal manner.
However, its statistical properties depend strongly on the
ratio $\frac{n_1}{n_0}$ of work values used.
As a central result we have proven the convexity
of the asymptotic mean square error of two-sided free-energy
estimation as a function of the fraction $\al=\frac{n_0}{N}$
of forward work values
used. From here follows immediately the existence and uniqueness of the
optimal fraction $\al_o$ which minimizes the asymptotic mean square error.
This is of particular interest if we can
control the value of $\al$, i.e.\ can make additional measurements
of work in either direction. Drawing such that we finally reach
$\frac{n_0}{N}=\al_o$, the efficiency of two-sided estimation
can be enhanced considerably. Consequently, we have developed
a dynamic sampling strategy which iteratively estimates $\al_o$
and makes additional draws or measurements of work. Thereby,
the convexity of the mean square error enters as a key criterion
for the reliability of the estimates.
For a simple example which allows to compare with analytic
calculations, the dynamic strategy has shown to work perfectly.

In the asymptotic limit of large sample sizes the dynamic strategy is
optimal and outperforms any other strategy. Nevertheless, in this
limit it has to compete with the near optimal equal cost
strategy of Bennett which also performs very good. It is worth mentioning
that even if the latter comes close to the performance of ours, it is
worthwhile the effort of using the dynamic strategy, since
the underlying algorithm can be easily implemented and does cost quite
anything if compared to the effort required for drawing additional work
values.

Most important for experimental and numerical estimation of free-energy
differences is the range of small and moderate sample sizes.
For this relevant range, it is found that the dynamic strategy
performs very good, too. It converges significantly better
than the equal cost strategy. In particular, for small and moderate sample
sizes it can improve the accuracy of free-energy estimates by half an order
of magnitude.

We close our considerations by mentioning that the two-sided estimator
is typically far superior with respect to one-sided estimators:
assume the support and $\pf$ and $\pr$ is symmetric about $\df$ \footnote{
which is not the case for the densities studied in Sec.~\ref{sec:8}};
then, if the densities are symmetric to each other, $\pf(\df+w)=\pr(\df-w)$,
the optimal fraction of forward draws is $\frac{n_0}{N}=\frac{1}{2}$
by symmetry. Therefore, if the symmetry is violated not too strongly,
the optimum will remain near $0.5$\,. Continuous deformations of the
densities change the optimal fraction $\al_o$ continuously. Thus, $\al_o$
does not reach $0$ and $1$, respectively, for some certain strength of
asymmetry. It is exceptionally hard to violate the symmetry such that
$\al_o$ hits the boundary $0$ or $1$. In consequence, in almost
all situations, the two-sided estimator is superior.

\section{Acknowledgments}\label{sec:10}

We thank Andreas Engel for a critical reading of the manuscript.

\end{document}